\title{Self-Stabilizing Robots without Location Information}
\title{Convergence of Even Simpler Robots\\without Location Information
}
\author{Debasish Pattanayak$^1$ \quad Kaushik Mondal$^1$ \\
Partha Sarathi Mandal$^1$ \quad Stefan Schmid$^{2,3}$\thanks{Trip to IIT Guwahati and Research were funded by Global Initiative of Academic Networks (GIAN), an initiative by Govt. of India for Higher Education.}}
\institute{\small $^1$Indian Institute of Technology Guwahati, India\\
$^2$Aalborg University, Denmark \quad $^3$TU Berlin, Germany
\\
}
\begin{document}
\maketitle
\begin{abstract}
The design of distributed gathering and convergence algorithms
for tiny robots has recently received much attention. In particular, it has been shown that
convergence problems can even be solved for very weak, \emph{oblivious}
robots: robots which cannot maintain state from one round to the next.
The oblivious robot model is hence attractive from a self-stabilization
perspective, where state is subject to adversarial manipulation.
However, to the best of our knowledge, all existing robot convergence protocols rely
on the assumption that robots, despite being ``weak'', can measure distances.

 We in this paper initiate the study of convergence protocols for even simpler robots,
 called \emph{monoculus robots}:
 robots which cannot measure distances. In particular, we introduce two
 natural models which relax the assumptions
 on the robots' cognitive capabilities: (1) a Locality Detection ($\mathcal{LD}$) model in which a robot
 can only detect whether another robot is closer than a given constant distance
 or not, (2)
 an Orthogonal Line Agreement ($\mathcal{OLA}$) model in which robots only agree on a pair of orthogonal lines
 (say North-South and West-East, but without knowing which is which).

 The problem turns out to be non-trivial,
 and simple median and angle bisection strategies can easily increase the distances among
 robots (e.g., the area of the enclosing convex hull) over time.
 Our main contribution are deterministic self-stabilizing convergence algorithms for
 these two models, together with a complexity analysis.
 We also show that in some sense,
 the assumptions made in our models are minimal: by relaxing the assumptions on the \textit{monoculus robots}
 further, we run into impossibility results.\\
\\
\noindent\textbf{Key words:}   Convergence, Weak Robots, Oblivious Mobile Robots, Asynchronous, Distributed Algorithm.\\
\end{abstract}

\sloppy

\section{Introduction}

\subsection{The Context: Tiny Robots}

In the recent years, there has been a wide interest in
the cooperative behavior of tiny robots.
In particular, many distributed coordination protocols have been devised
for a wide range of models and for a wide range of problems,
like convergence, gathering, pattern formation,
flocking, etc.
At the same time, researchers have also started characterizing
the scenarios in which such problems cannot be solved, deriving
impossibility results.

\subsection{Our Motivation: Even Simpler Robots}

An interesting question regards the minimal cognitive
capabilities that such tiny robots need to have for completing a particular task.
In particular, researchers have initiated the study of ``weak robots''\cite{FlocchiniPSW99}.
Weak robots are \textit{anonymous} (they do not have any identifier),
\textit{autonomous} (they work independently),
\textit{homogeneous} (they behave the same in the same situation),
and \textit{silent} (they also do not communicate with each other).

Weak robots are usually assumed to have their own local view, represented as a Cartesian
coordinate system with origin and unit length and axes. The orientation of axes, or the \textit{chirality} (relative order of the orientation of axes or handedness),
is not common among the robots.
The robots move in a sequence of three consecutive actions, \textit{Look-Compute-Move}:
they observe the positions of other robots in their local coordinate system and the observation step returns a set of points to the observing robot.
The robots cannot distinguish if there are multiple robots at the same position,
i.e., they do not have the capability of \textit{multiplicity detection}.
Importantly, the robots are \textit{oblivious} and cannot maintain state between rounds
(essentially moving steps).
The computation they perform are always based on the data they have collected in the \emph{current} observation step; in the next round they again collect the data.
Such weak robots are therefore interesting
from a self-stabilizing perspective: as robots do not rely on memory,
an adversary cannot manipulate the memory either.
Indeed, researchers have demonstrated that weak robots
are sufficient to solve a wide range of problems.

We in this paper aim to relax the assumptions on the tiny robots
further. In particular, to the best of our knowledge, all prior literature
assumes that robots can observe the positions of other robots in their local view. This enables them to calculate the distance between any pair of robots.
This seems to be a very strong assumption, and accordingly,
we in this paper initiate the study of even weaker robots which
cannot locate other robots positions in their local view, preventing them from measuring distances. We define these kind of robots as \textit{monoculus robots}.

In particular, we
initiate to explore two naturally weaker models for monoculus robots with less cognitive capabilities
\begin{enumerate}
\item \emph{Locality Detection} ($\mathcal{LD}$): The robots can distinguish whether a neighbor robot is at a distance more than a predefined value $c$ or not.
\item \emph{Orthogonal Line Agreement} ($\mathcal{OLA}$): The robots agree on a pair of orthogonal lines (but not necessarily the orientation of the lines).
\end{enumerate}

\subsection{The Challenge: Convergence}

We focus on the fundamental convergence problem for monoculus robots
and show that the problem is already non-trivial in this setting.

In particular, many naive strategies lead to non-monotonic
behaviors. For example, strategies where boundary robots
(robots located on the convex hull) move toward the
``median'' robot they see, may actually \emph{increase}
the area of the convex hull in the next round,
counteracting convergence as shown in Fig.~\ref{fig:anglebisector}$(a)$.
A similar counterexample exists for a strategy where
robots move in the direction of the angle bisector as shown in Fig.~\ref{fig:anglebisector} ($b$).
\noindent
\begin{figure}[H]\centering
\includegraphics[width=\linewidth]{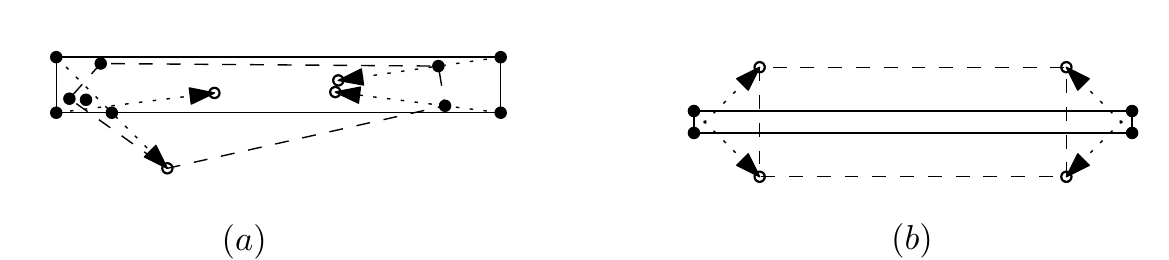}
\caption{ The 4 boundary robots are moving $(a)$ towards the median robot $(b)$ along the angle bisector. The discs are the old positions and circles are the new positions. The old convex hull is drawn in solid line, the new convex hull is dashed. The arrows denote the direction of moving.  }\label{fig:anglebisector}
\end{figure}
But not only enforcing convex hull invariants is challenging,
also the fact that visibility is restricted and we cannot detect
multiplicity:
We in this paper assume that robots are not transparent, and
accordingly, a robot does not see whether and how many robots
may be hidden behind a visible robot. As robots are also not able
to perform multiplicity detection (i.e., determine how many robots
are collocated at a certain point), strategies such as ``move toward
the center of gravity'' (the direction in which most robots are located),
are not possible.

\subsection{Our Contributions}

This paper studies distributed convergence problems
for anonymous, autonomous, oblivious, non-transparent, monoculus, point robots under a most general asynchronous scheduling model
and makes the following contributions.
\begin{enumerate}
\item We initiate the study of a new kind of robot,
 the \emph{monoculus robot} which cannot measure distances.
 The robot comes in two natural flavors, and we introduce the
 Locality Detection ($\mathcal{LD}$)
and the Orthogonal Line Agreement ($\mathcal{OLA}$) model accordingly.
\item We present and formally analyze deterministic and self-stabilizing distributed convergence algorithms for both
$\mathcal{LD}$ and $\mathcal{OLA}$.
\item We show our assumptions in $\mathcal{LD}$ and $\mathcal{OLA}$ are minimal in the sense that
robot convergence is not possible for monoculus robots without any additional capability.
\item We report on the performance of our algorithms through simulation.
\item We show that our approach can be generalized to higher dimensions and, with a small extension, supports termination.
\end{enumerate}

\subsection{Paper Organization}

The remainder of this paper is organized as follows.
Section~\ref{sec:preli} introduces the necessary
background and preliminaries.
Section~\ref{sec:algo} introduces two algorithms for convergence.
Section~\ref{sec:impossibilty} presents an impossibility result which shows the minimality of our assumptions.
We report on simulation results in
Section~\ref{sec:simul} and discuss extensions in
Section~\ref{sec:discussion}.
In Section~\ref{sec:relwork}, we review related work,
before we conclude in Section~\ref{sec:conclusion}.

\section{Preliminaries}\label{sec:preli}

\subsection{Model}

We consider anonymous, autonomous, homogeneous, oblivious, non-transparent robots with unlimited visibility, unless the view is obstructed by another robot:
Since the robots are non-transparent, any robot can see at most one robot in any direction.
As usual, the robots in each round execute a sequence of \textit{Look-Compute-Move} steps:
First, the robot observes other robots (\textit{Look} step); second, on the basis of the observed information, it executes an algorithm
which computes a direction which the robot must move towards
(\textit{Compute} step);
the robot then moves in this direction (\textit{Move} step),
for a fixed distance $b$ (the step size).
The robots are silent, cannot detect multiplicity points, and can
pass over each other (no collision occurs).

In this paper, we introduce monoculus robots:
\begin{definition}(Monoculus Robot)
A robot is called \emph{monoculus} if it is anonymous, autonomous, oblivious, homogeneous, and silent. We assume the robot is a non-transparent point
robot, has unlimited visibility, and can neither determine the position of other robots nor detect multiplicty.
\end{definition}

We consider the most general CORDA or ASYNC scheduling model
known from weak robots~\cite{FlocchiniPSW99} as well as
the ATOM or Semi-Synchronous (SSYNC) model~\cite{suzu}.
These models define the activation schedule of the robots:
the SSYNC model considers instantaneous computation and movement, i.e.,
the robots cannot observe other robots in motion,
while in the ASYNC model any robot can look at any time. In SSYNC the time is divided into global rounds and a subset of the robots are activated in each round
which finish their \textit{Look-Compute-Move} within that round.
In case of ASYNC, there is no global notion of time.
The Fully-synchronous (FSYNC) is a special case of SSYNC, in which all the robots are activated in each round.
The algorithms presented in this paper work in both the ASYNC and the
SSYNC setting. For the sake of generality, we present our proofs
in terms of the ASYNC model.

\subsection{Notation and Terminology}

 A configuration $C$ is a multiset containing all the robot positions in 2D.
 At any time $t$ the configuration (the mapping of robots in the plane)
 is denoted by $C_t$. The convex hull of configuration $C_t$ is denoted as $CH_t$.
\emph{Convergence} is achieved when the distance between any pair of robots is less than a predefined value $c$ (and subsequently does not violate this anymore).
Our multi-robot system is vulnerable to adversarial manipulation,
however the algorithms presented in this paper are self-stabilizing~\cite{dolev2000self}
 and robust to state manipulations.
Since the robots are oblivious, they only depend on the \emph{current state}: if the state is perturbed,  the algorithms are still able to converge in a self-stabilizing manner \cite{Gilbert2009}.

\section{Convergence Algorithms}\label{sec:algo}

We now present distributed robot convergence algorithms for both our models,
$\mathcal{LD}$ and $\mathcal{OLA}$.

\subsection{Convergence for $\mathcal{LD}$}

In this section we consider the convergence problem for the monoculus robots in
the $\mathcal{LD}$ model. Our claims hold for any $c\geq 2b$.
Algorithm~\ref{algo:convergelocality}
distinguishes between two cases: (1) If the robot only sees one other robot,
it infers that the current configuration must be a line (of 2 or more robots),
and that this robot must be on the border of this line;
in this case,  the boundary robots always move inside (usual step size $b$).
(2) Otherwise, a robot moves towards  any visible, non-local robot (distance
at least $c$),
for a $b$ distance (the step size).

Our proof unfolds in a number of lemmas followed by a theorem.
First, Lemma~\ref{lem:4bdistance} shows that it is impossible to have a pair of robots with distance larger than $2c$ in the converged situation. Lemma~\ref{lem:chsubset} shows that our algorithm ensures a monotonically decreasing convex hull size.  Lemma~\ref{lem:finitedecrement} then proves that the decrement in perimeter for each movement is greater than a constant (the convex hull decrement is strictly monotonic). Combining all the three lemmas, we obtain the correctness proof of the algorithm.
In the following, we call two robots \emph{neighboring} if they see each other (line of sight is not obstructed by another robot).

\begin{algorithm}
\DontPrintSemicolon
\caption{\textsc{ConvergeLocality}}\label{algo:convergelocality}
\SetKwInOut{Input}{Input}\SetKwInOut{Output}{Output}
\Input{Any arbitrary configuration}
\Output{All robots are inside a circle of radius $c$}
\eIf{only one robot is visible}{Move distance $b$ towards that robot}
{\eIf{there is at least one robot farther than $c$}{Move distance $b$ towards any one of the robots with distance more than $c$}{Do not move\tcp*{All neighbor robots are within a distance $c$}}}
\end{algorithm}

\begin{lemma}\label{lem:4bdistance}
If there exists a pair of robots at distance more than $2c$ in a non-linear configuration, then there exists a pair of neighboring robots at distance more than $c$.
\end{lemma}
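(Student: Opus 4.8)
The plan is to produce the required long-distance neighboring pair by exhibiting a single, well-chosen \emph{witness} robot that is mutually visible with \emph{both} endpoints of the given far-apart pair.

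First I would fix the far-apart pair $p,q$ with $|pq|>2c$ and let $\ell$ be the line through $p$ and $q$; choosing coordinates, I take $\ell$ to be the $x$-axis with $p,q$ on it. Since the configuration is non-linear, not every robot lies on $\ell$, so there is at least one robot off $\ell$. Among all such off-$\ell$ robots I would pick $s$ minimizing the perpendicular distance $h>0$ to $\ell$, and (reflecting across $\ell$ if needed) assume $s$ has $y$-coordinate exactly $h$. The crucial consequence of this extremal choice is that \emph{no robot has $y$-coordinate strictly between $0$ and $h$}.

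Next I would observe that $s$ sees $p$: every interior point of the segment from $s$ to $p$ has $y$-coordinate strictly between $0$ and $h$, and by the previous sentence no robot occupies such a position, so the line of sight $sp$ is unobstructed; the identical argument gives that $s$ sees $q$. Hence both $\{s,p\}$ and $\{s,q\}$ are neighboring pairs (note $s\neq p,q$ since $s\notin\ell$). Finally, the triangle inequality gives $|sp|+|sq|\ge |pq|>2c$, so at least one of $|sp|,|sq|$ exceeds $c$, which is exactly a pair of neighboring robots at distance more than $c$.

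The one real idea here is the extremal choice of $s$ as the off-line robot closest to $\ell$: that is precisely what confines the segments $sp$ and $sq$ to the ``empty strip'' $0<y<h$ and therefore makes them unobstructed; everything else is a single triangle-inequality estimate. I do not anticipate a genuine obstacle. The only things worth double-checking are degenerate placements (collocated robots, or robots on $\ell$ lying strictly between $p$ and $q$), none of which can place a robot with $y$-coordinate in $(0,h)$ onto the segments in question; and the role of the non-linearity hypothesis, which is used solely to guarantee that a witness $s$ off the line $\ell$ exists at all (indeed the statement genuinely fails for a collinear configuration such as equally spaced points at spacing $c$).
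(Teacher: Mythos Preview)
Your argument is correct, and it is in fact tighter than the paper's own proof. Both proofs start from the same two observations: non-linearity furnishes an off-line witness, and any such witness must be far (more than $c$) from at least one of the two endpoints $p,q$ by a triangle-inequality/perpendicular-bisector estimate. Where they diverge is in establishing \emph{visibility}. The paper picks an arbitrary off-line robot $r_5$, notes it lies on one side of the perpendicular bisector of $\overline{pq}$ and is therefore far from one endpoint, and then, if the segment from $r_5$ to that endpoint happens to be blocked by some robot, replaces $r_5$ by the blocker and ``argues similarly'' --- an informal descent. Your extremal choice of $s$ as the off-line robot of \emph{minimum} perpendicular distance to $\ell$ eliminates this descent entirely: the open strip $0<y<h$ is empty by construction, so both segments $sp$ and $sq$ are automatically unobstructed. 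What this buys you is a one-shot constructive argument with no recursion and no appeal to the algorithm's behaviour (the paper's proof opens with ``for them not to move, there are at least two robots on the line\ldots'', which is really extraneous to the purely geometric claim). The paper's approach, on the other hand, makes the intuitive picture (Fig.~\ref{fig:4bdistance}) more explicit at the cost of some rigor in the blocked case.
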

\begin{proof}
Proof by contradiction. If there is a pair of robots with distance more than $2c$, then for them not to move, there are at least two robots on the line joining them positioned such that each pair has a distance less than $c$. Since the robots are non-transparent, the end robots cannot look beyond their neighbors to know that there is a robot at a distance more than $c$.
In Fig.~\ref{fig:4bdistance}, $r_1$ and $r_4$ are $2c$ apart. So $r_2$ and $r_3$ block the view. Since it is a non-linear configuration, say robot $r_5$ is not on the line joining $r_1$ and $r_4$. $l$ is the perpendicular bisector of $\overline{r_1r_4}$. If $r_5$ is on the left side, then it is more than $c$ distance away from $r_4$ and vice versa. If there is another robot on $\overline{r_4r_5}$, then consider that as the new robot in a non-linear position, and we can argue similarly. Hence there would at least be a single robot similar to $r_5$ in a non-linear configuration for which the distance is more than $c$.

\begin{figure}[H]
\centering
\includegraphics[height=0.2\linewidth]{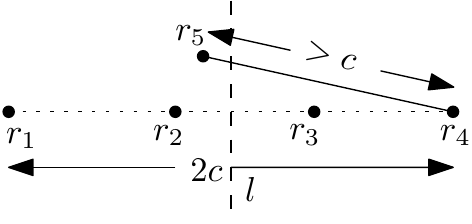}
\caption{A non-linear configuration with a pair of robots at a distance $2c$}\label{fig:4bdistance}
\end{figure}
\qed
\end{proof}

\begin{lemma}\label{lem:chsubset}
For any time $t' > t$ before convergence, $CH_{t'}\subseteq CH_{t}$.
\end{lemma}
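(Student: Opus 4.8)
The plan is to prove the stronger statement that \emph{every robot lies inside $CH_t$ at every time $t'\ge t$ before convergence}; the set inclusion $CH_{t'}\subseteq CH_t$ is then immediate, because $CH_t$ is a compact convex region and the convex hull of any point set contained in it is again contained in it. So the whole argument reduces to tracking individual robot positions rather than hulls.

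The core step is a single \emph{Look--Compute--Move} activation: if a robot $r$ performs a Look in a (not yet converged) configuration $C$ and then completes the resulting Move, its new position lies in the convex hull of $C$. This splits along the branches of Algorithm~\ref{algo:convergelocality}. (i) If $r$ sees exactly one robot $r''$, then, since $r$ is a non-transparent point robot and sees nothing in any other direction, every robot of $C$ must lie on the line through $r$ and $r''$ (otherwise the nearest robot in some other direction would be visible), with $r$ an endpoint; hence the hull of $C$ is a segment, and $r$ moves distance $b$ along it towards $r''$, i.e.\ towards the interior. As $C$ is not converged, this segment has length at least $c\ge 2b>b$, so $r$ cannot overshoot the far end and stays on the segment. (ii) If $r$ sees at least two robots, it moves distance $b$ towards some visible robot $r''$ at distance at least $c$; both $r$ and $r''$ are points of $C$, so the segment $[r,r'']$ lies in the convex hull of $C$, and since $|rr''|\ge c\ge 2b>b$ the displacement of length $b$ lands $r$ on this segment, hence in the hull. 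The ``do not move'' case is trivial.

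To lift this to all robots and all times, I would order the activation events occurring after time $t$ chronologically and argue by induction over this sequence, maintaining the invariant ``all robots are contained in $CH_t$''. When a robot $r$ performs the Look that eventually triggers its Move, the invariant applied to the earlier events gives that the configuration $r$ observes lies in $CH_t$, so its convex hull is contained in $CH_t$; the single-activation claim then places $r$'s new position in $CH_t$ as well, while robots that do not move obviously stay put. The base case $C_t\subseteq CH_t$ holds by definition.

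The delicate point, and the one I expect to be the real obstacle, is the ASYNC asynchrony: a robot may still be executing a Move at time $t$ that it computed from an observation taken strictly \emph{before} $t$, and at that earlier instant the configuration (hence the observed hull) may have been strictly larger than $CH_t$, so confinement of that in-flight move to $CH_t$ is not automatic. I would handle this by taking the reference time $t$ to be an instant at which no Move is in progress (equivalently, by strengthening the invariant to also carry the fixed target point of each pending Move and noting that every such target was itself a robot position inside the hull at the moment it was computed); with this care the chronological induction closes, and in the SSYNC and FSYNC specializations the issue disappears entirely, since every Move is based on the configuration of the immediately preceding round. One minor point worth stating explicitly along the way is that ``not yet converged'' is precisely what guarantees the length-$\ge c\ge 2b$ bounds used to rule out overshooting.
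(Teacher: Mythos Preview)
Your argument is correct and follows essentially the same route as the paper's own proof: show that every individual step keeps the moving robot on a segment between two current robot positions and that the step length $b<c$ prevents overshooting, hence the robot remains inside the convex hull; the linear case is handled separately. Your treatment is in fact more careful than the paper's, which states the observation in two sentences without addressing the ASYNC in-flight issue you single out; your proposed fix (carry the pending target in the invariant, or restrict to quiescent reference times) is a legitimate way to close that gap.
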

\begin{proof}
The proof follows from a simple observation. Consider any robot $r_i$. If $r_i$ decides to move towards some robot, say $r_j$, then it is at least $c$ distance away. Even if $r_j$ is on the boundary, $r_i$ cannot cross the boundary. If $r_i$ is already on the boundary, then it always moves on the perimeter or inside the convex hull. Hence the convex hull gradually decreases.

 If all the robots are on a straight line, then the boundary robots move monotonically closer in each step. The distance between the end robots is a monotonically decreasing sequence until it reaches $c$.
\qed
\end{proof}

\begin{lemma}\label{lem:finitedecrement}
In finite time the decrement in the perimeter of the convex hull is at least $ b \left(1- \sqrt{\frac{1}{2}\left(1+\cos\left(\frac{2\pi}{n}\right)\right)}\right)$.
\end{lemma}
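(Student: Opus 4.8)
The plan is to track a single boundary vertex of the convex hull and show that when it moves, it brings the perimeter down by a definite amount that cannot vanish. First I would set up the geometry at an arbitrary active boundary robot $r_i$ located at a vertex of $CH_t$, with convex-hull neighbors $r_{i-1}$ and $r_{i+1}$ (the adjacent vertices along the hull). The perimeter contribution of $r_i$ is $|r_{i-1}r_i| + |r_i r_{i+1}|$, and after $r_i$ moves a distance $b$ to a new point $r_i'$ (which by Lemma~\ref{lem:chsubset} lies inside or on $CH_t$), this contribution becomes $|r_{i-1}r_i'| + |r_i' r_{i+1}|$; every other edge of the hull can only shrink or stay the same, so it suffices to lower-bound the drop $\big(|r_{i-1}r_i| + |r_i r_{i+1}|\big) - \big(|r_{i-1}r_i'| + |r_i' r_{i+1}|\big)$ over those two edges.

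Next I would argue that some boundary robot must be active and actually move within finite time: before convergence there is a pair at distance more than $c$; if the configuration is linear the end robots move (case~1 of the algorithm), and if it is non-linear, Lemma~\ref{lem:4bdistance}-style reasoning (or the fact that the hull has positive diameter $> c$) forces some vertex robot to see a neighbor at distance $\geq c$, so under the ASYNC scheduler that robot eventually executes a move. The worst case for the perimeter decrement is when the interior angle at $r_i$ is as large as possible; a vertex of an $n$-gon has interior angle at most $\frac{(n-2)\pi}{n}$, and the ``flattest'' situation — $r_i$ moving straight toward a robot collinear with the angle bisector, stepping a chord of length $b$ inside a wedge of half-angle $\frac{(n-2)\pi}{2n} = \frac{\pi}{2} - \frac{\pi}{n}$ — yields exactly a decrement of $b\big(1 - \sqrt{\tfrac12(1+\cos\tfrac{2\pi}{n})}\big)$ after simplifying with the half-angle identity $\cos^2\theta = \tfrac12(1+\cos 2\theta)$. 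I would make this precise by showing that for a point moving distance $b$ from a vertex of interior angle $\alpha$ into the hull, the two incident edge lengths together drop by at least $b(1-\cos\frac{\alpha}{2})$, minimized over $\alpha \le \frac{(n-2)\pi}{n}$.

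The main obstacle I anticipate is the case analysis on \emph{which} robot moves and \emph{in what direction}: the algorithm does not send a boundary robot toward its hull-neighbor or along its bisector, but toward an arbitrary robot at distance $\geq c$, so I must verify that \emph{every} admissible move direction from a hull vertex produces at least the claimed decrement — the bisector/flat case being the extremal one — and also handle the subtlety that the active robot need not itself be a hull vertex (a robot strictly inside the hull moving cannot increase the perimeter, and moving such robots only helps). A second, more technical point is the ASYNC scheduler: I need that a robot which decided to move based on an out-of-date \emph{Look} still cannot leave $CH_t$ and still realizes a bounded-below decrement, which follows because the target was at distance $\geq c \geq 2b$ at observation time and the hull only shrinks in the interim, so the mover stays well inside and the geometric estimate above still applies. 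Once these are in place, ``in finite time'' is just the statement that the scheduler must eventually activate such a moving robot.
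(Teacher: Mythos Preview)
Your single-move bookkeeping has a genuine gap at the step ``a vertex of an $n$-gon has interior angle at most $\frac{(n-2)\pi}{n}$''. That bound is only the \emph{average}; by pigeonhole some vertex satisfies it, but an arbitrary hull vertex can have interior angle arbitrarily close to $\pi$. Since Algorithm~\ref{algo:convergelocality} lets a boundary robot move toward \emph{any} robot it sees at distance $>c$, the robot that actually moves may well sit at a nearly flat vertex, and then the drop $\big(|r_{i-1}r_i|+|r_ir_{i+1}|\big)-\big(|r_{i-1}r_i'|+|r_i'r_{i+1}|\big)$ can be made as small as you like. So bounding the decrement of one generic move cannot give the stated constant. (Relatedly, your claimed edge-drop formula $b(1-\cos\frac{\alpha}{2})$ does not reduce to the target expression: with $\alpha=\pi-\frac{2\pi}{n}$ it gives $b(1-\sin\frac{\pi}{n})$, whereas the lemma's quantity equals $b(1-\cos\frac{\pi}{n})$; and it even moves the wrong way in $\alpha$.)

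The paper sidesteps this by not tracking a single move at all. It first uses the angle-sum pigeonhole to locate a \emph{specific} corner $A$ of $CH_t$ with interior angle at most $\pi-\frac{2\pi}{n}$, draws the disc of radius $b/2$ about $A$, and lets $B,C$ be the two points where that circle meets the hull boundary. The key observations are: (i) any robot currently inside this disc, once activated, moves a full step $b$ toward some target at distance $\ge c\ge 2b$ and therefore lands outside the disc; (ii) no robot outside the disc can step into it, since a step of length $b$ toward a target at distance $\ge 2b$ leaves it at distance $\ge b$ from that target. Hence in finite time the disc is empty, the corner at $A$ is chopped off, and the perimeter loses at least $AB+AC-BC$; the cosine rule with $|AB|=|AC|=b/2$ and angle $\pi-\frac{2\pi}{n}$ then yields exactly $b\bigl(1-\sqrt{\tfrac12(1+\cos\tfrac{2\pi}{n})}\bigr)$. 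The essential idea you are missing is this ``vacate a fixed sharp corner'' argument, which decouples the decrement from which robot the scheduler happens to activate.
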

\begin{proof}
The sum of internal angles of a $k$-sided convex polygon is $(k-2)\pi$. So there exists a robot $r$ at a corner $A$ (ref. Fig.~\ref{fig:decrement}) of the convex hull such that the internal angle is less than $(1-\frac{2}{n})\pi$, where $n$ is the total number of robots.
Let $B$ and $C$ be the points where the  circle centered at $A$ with radius $b/2$ intersects the convex hull.
Any robot lying outside the circle will not move inside the circle according to Algorithm~\ref{algo:convergelocality}.
All the robots inside the circle will eventually move out once they are activated.
After all the robots are activated at least once, the decrement in perimeter is at least $AB + AC - BC$. From cosine rule,
$$AB + AC - BC = \frac{b}{2}+\frac{b}{2} -\sqrt{\left(\frac{b}{2}\right)^2+\left(\frac{b}{2}\right)^2-  2\frac{b}{2}\frac{b}{2}\cos\left(\pi -\frac{2\pi}{n}\right)}  $$
$$ = b \left(1- \sqrt{\frac{1}{2}\left(1+\cos\left(\frac{2\pi}{n}\right)\right)}\right)$$
\begin{figure}[H]\centering
\includegraphics[height=0.3\linewidth]{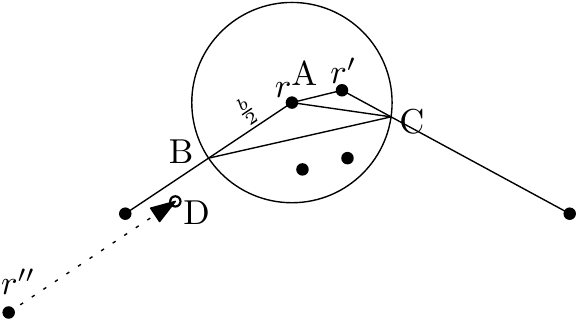}
\caption{Once activated the robots $r$ and $r'$ will move outside the solid circle with radius $b/2$. The robot $r''$ moves a distance $b$ towards $r'$ because distance between them is more than $2b$ and stops at $D$.}\label{fig:decrement}
\end{figure}
\qed
\end{proof}
%

\begin{theorem}(Correctness)
Algorithm~\ref{algo:convergelocality} terminates when all the robots are within a $c$ radius disc.
\end{theorem}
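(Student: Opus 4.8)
The plan is to combine the three lemmas into a standard potential-argument.  I would use the perimeter of the convex hull (together with, in the degenerate case, the length of the enclosing segment) as the potential function.  The skeleton of the proof is: assume for contradiction that the system has not converged, i.e. there is still a pair of robots at distance more than $c$.  By Lemma~\ref{lem:4bdistance}, if the configuration is non-linear and some pair is farther than $2c$, then some neighboring pair is farther than $c$; more simply, as long as convergence has not happened, \emph{some} robot sees a non-local robot (distance $\geq c$) or the configuration is a line with end-robots more than $c$ apart, so \emph{some} robot is eligible to move under Algorithm~\ref{algo:convergelocality}.  Hence the execution does not stall.

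Next I would invoke Lemma~\ref{lem:chsubset} to note that the convex hull is monotonically non-increasing, so the perimeter $P_t$ of $CH_t$ is a non-increasing sequence bounded below by $0$.  Then Lemma~\ref{lem:finitedecrement} gives a uniform constant $\delta = b\bigl(1-\sqrt{\tfrac12(1+\cos(2\pi/n))}\bigr) > 0$ such that, after every robot has been activated at least once (which happens in finite time under fairness of the ASYNC/SSYNC scheduler), the perimeter drops by at least $\delta$.  Since $P_t$ cannot go below $0$, there can be only finitely many such activation-rounds before no robot is eligible to move, i.e. before every robot sees only neighbors within distance $c$.  Finally I would translate ``every robot's visible neighbors are within $c$'' plus Lemma~\ref{lem:4bdistance} into the statement that \emph{all} robots lie within a disc of radius $c$: if two robots were more than $2c$ apart, Lemma~\ref{lem:4bdistance} forces a movement (contradiction), so the diameter of the configuration is at most $2c$, giving convergence into a disc of radius $c$; the linear case is handled separately by the second paragraph of Lemma~\ref{lem:chsubset}, where the segment length shrinks monotonically to $c$.

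The main obstacle I expect is the bookkeeping around the ASYNC model and the ``finite time'' claim in Lemma~\ref{lem:finitedecrement}: one has to argue carefully that between two ``full rounds'' of activations the perimeter does not merely fail to increase (Lemma~\ref{lem:chsubset}) but actually realizes the promised decrement, and that the constant $\delta$ does not itself degrade as $n$ or the configuration changes — here $\delta$ depends only on $n$, which is fixed, so this is fine, but it must be stated.  A second subtlety is the interplay between the linear and non-linear cases: a non-linear configuration could in principle collapse onto a line, at which point the analysis switches branches; I would note that once the configuration is linear it stays (weakly) linear under the first branch of the algorithm and the end-to-end distance is a decreasing sequence converging to $\leq c$, so no cycling between the two regimes can prevent termination.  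Everything else is routine: assemble the monotone, lower-bounded, uniformly-decreasing potential and conclude finiteness, then read off the geometric consequence.
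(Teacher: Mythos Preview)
Your proposal is correct and follows essentially the same route as the paper: combine Lemma~\ref{lem:chsubset} (monotone hull) with Lemma~\ref{lem:finitedecrement} (uniform perimeter drop) to force the process to stop in finitely many steps, then use Lemma~\ref{lem:4bdistance} to conclude that no pair is farther than $2c$ and hence the robots fit in a radius-$c$ disc. Your write-up is simply more explicit than the paper's about the potential argument, the scheduler fairness needed for Lemma~\ref{lem:finitedecrement}, and the linear/non-linear case split, but the underlying structure is identical.
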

\begin{proof}
From Lemmas~\ref{lem:chsubset}
and~\ref{lem:finitedecrement} we know that the convex hull never
increases, and eventually a robot will be activated which strictly
decreases the hull.
According to Lemma~\ref{lem:4bdistance}, eventually there will not be a pair of robots with more than $2c$ distance.
Note that the distance between any two points in a disc of radius $c$ is less than or equal to $2c$. Hence the robots will converge within a disc of radius $c$.
\qed\end{proof}

\subsection{Convergence for $\mathcal{OLA}$}

In this section we consider monoculus robots in the $\mathcal{OLA}$ model.
Our algorithm will distinguish between
\textit{boundary-}, \textit{corner-} and \textit{inner-robots}, defined
in the canonical way. We note that robots can determine their type:
From the Fig.~\ref{fig:orthogonalline}, we can observe that for $r_2$, all the robots lie below the horizontal line. That means, one side of the horizontal line is empty
and therefore $r_2$ can figure out that it is a boundary robot. Similarly all $r_i$, $i\in \{3,4,5,6,7,8\}$ are boundary robots. Whereas, for $r_1$, both horizontal and vertical lines have one of the sides empty, hence $r_1$ is a corner robot. Other robots are all inner robots.
 Consequently, we define \textit{boundary robots} to be those, which have exactly one side of one of the orthogonal lines empty.

  Algorithm~\ref{algo:convergequadrant} (\textsc{ConvergeQuadrant}) can be described as follows.
  A rectangle can be constructed with lines parallel to the orthogonal lines passing through boundary robots such that, all the robots are inside this rectangle.
  In Fig.~\ref{fig:orthogonalline}, each boundary robot always moves inside the rectangle perpendicular to the boundary and the inside robots do not move.
  Note that the corner robot $r_1$ has two possible directions to move. So it moves toward any robot in that common quadrant.
  Gradually the distance between opposite boundaries becomes smaller and smaller and the robots converge. In case all the robots are on a line which is parallel to either of the orthogonal lines, then the robots will find that both sides of the line are empty.
  In that case they should not move. But the robots on either end of the line would only see one robot. So they would move along the line towards that robot.

\begin{figure}[!h]\centering
\includegraphics[height=0.4\linewidth]{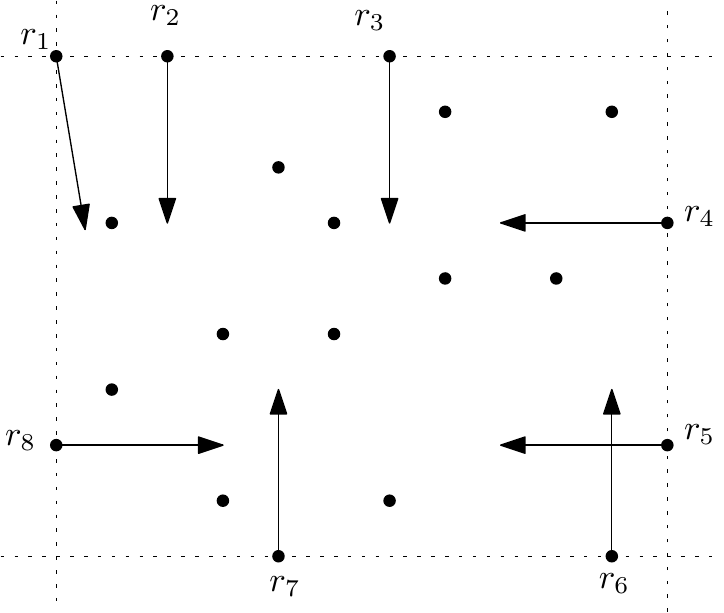}
\caption{Movement direction of the boundary robots}\label{fig:orthogonalline}
\end{figure}

\begin{algorithm}[!h]
\DontPrintSemicolon
\caption{\textsc{ConvergeQuadrant}}\label{algo:convergequadrant}
\SetKwInOut{Input}{Input}\SetKwInOut{Output}{Output}
\Input{Any arbitrary configuration and robot $r$}
\Output{All robots are inside a square with side $2b$ }
\uIf{only one robot is visible}{Move towards that robot}
\uElseIf{$r$ is a boundary robot}{Move perpendicular to the boundary to the side with robots}
\uElseIf{$r$ is a corner robot}{Move towards any robot in the non-empty quadrant}
\Else{Do not move \tcp*{It is an inside robot}}
\end{algorithm}

\begin{theorem}(Correctness)
Algorithm~\ref{algo:convergequadrant} moves all the robots inside some $2b$-sided square in finite time.
\end{theorem}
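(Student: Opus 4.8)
The plan is to mirror the structure of the $\mathcal{LD}$ correctness proof: show that the bounding rectangle (defined by the extreme boundary robots along each of the two orthogonal directions) is monotonically non-increasing, and that within finite time at least one of its two side-lengths strictly decreases by a constant amount. Once both side-lengths have dropped to at most $2b$, the robots are contained in a $2b$-sided square. Let me fix the two orthogonal directions and call them horizontal and vertical; the bounding rectangle $R_t$ at time $t$ is the axis-aligned (with respect to these directions) minimal rectangle containing $C_t$.

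First I would establish the monotonicity claim: $R_{t'} \subseteq R_t$ for $t' > t$. By inspection of Algorithm~\ref{algo:convergequadrant}, a boundary robot moves strictly toward the interior of $R_t$ perpendicular to its boundary edge, a corner robot moves toward a robot in its non-empty quadrant (hence its displacement vector has nonpositive projection onto both outward normals of the two edges it lies on), and inner robots do not move. In the ASYNC model one must be slightly careful: a robot computes its move based on a possibly stale view, but since every robot's move can only decrease its coordinate in the outward-normal direction of any face it lies on relative to the rectangle at the time it \emph{looked}, and that rectangle already contained $R$ at the later time, no robot can ever exit the current rectangle. The degenerate case where all robots lie on a single line parallel to one of the orthogonal directions is handled separately exactly as in Algorithm~\ref{algo:convergelocality}: the two end robots each see only one robot and walk toward it, so the line segment shrinks monotonically.

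Next I would prove a strict-decrement lemma analogous to Lemma~\ref{lem:finitedecrement}. Consider, say, the left edge of $R_t$ (the vertical line through the leftmost robot(s)). Every robot strictly to the right of the vertical line at horizontal distance $b$ from the left edge will, when it is a boundary or corner robot on a \emph{different} face, never move left past that line; and any robot within that left strip of width $b$ is either the leftmost boundary robot — which moves right by $b$ when activated — or will become one. After every robot has been activated at least once (which happens in finite time under a fair scheduler in both SSYNC and ASYNC), the leftmost extent has moved right, unless some robot is pinned against the left edge forever, which can only happen if that robot sees only one other robot (the line case, already handled) or is an inner robot (impossible, since it is on the boundary). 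Thus each of the two width/height quantities strictly decreases by at least a constant (order $b$) in finite time until it reaches $2b$. Combining monotonicity with repeated strict decrements of whichever dimension still exceeds $2b$ gives that in finite time both dimensions are at most $2b$, i.e. all robots lie in a $2b$-sided square.

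The main obstacle I anticipate is the corner robot, and more generally the interaction between the two directions: a corner robot's move is only guaranteed to be \emph{non-increasing} in each of the two coordinates, not strictly decreasing in the one we care about, so one has to argue that it cannot indefinitely stall the shrinking of, say, the horizontal dimension while the vertical dimension also fails to shrink. The resolution is that if the horizontal dimension is stuck, the set of robots realizing the current left (or right) edge must eventually all be corner robots that keep choosing to move purely vertically — but a corner robot moving vertically only helps the vertical dimension, and a symmetric argument shows the vertical dimension then cannot also be stuck, so at least one dimension always makes progress; once a dimension is forced below $2b$ it stays there by monotonicity, and we repeat on the other. A secondary subtlety, again, is ASYNC staleness near the boundary, handled by the same observation used for Lemma~\ref{lem:chsubset}: outward progress past the \emph{current} rectangle is impossible because it was already impossible past the (larger) rectangle the robot saw when it looked.
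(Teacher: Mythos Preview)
Your proposal is correct and follows essentially the same skeleton as the paper's proof: track the axis-aligned bounding rectangle, argue it is monotone non-increasing, and show each side-length drops below $2b$ in finite time. Your treatment is in fact more careful than the paper's on monotonicity and ASYNC staleness.

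The one place you and the paper diverge is the corner-robot subtlety. The paper does not argue, as you do, that ``if one dimension is stuck the other must progress''; instead it simply observes that a corner robot's move contributes a (possibly tiny, $\epsilon>0$) decrement in \emph{both} dimensions, and that the degenerate situation where this decrement is negligible in one dimension (e.g.\ a strip of width $b$ where the corner robot just swaps to the adjacent corner) can recur only finitely many times. That argument is shorter and sidesteps the case analysis you sketch; your dichotomy argument is also fine but requires a little more care than you give it, since a corner robot can target a robot on the same edge and hence move with zero horizontal component, in which case you need the observation that such a target will itself vacate the edge when activated.
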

\begin{proof}
Consider the distance between the robots on the left and right boundary. The horizontal distance between them decreases each time either of them gets activated. The rightmost robot will move towards the left and the leftmost will move towards the right. The internal robots do not move. So in at most $n$ activation rounds of the boundary robot, the distance between two of the boundary nodes will decrease by at least $b$. Hence the distance is monotonically decreasing until $2b$.
Afterwards, the total distance will never exceed $2b$ anymore.

Given there is a corner robot present in the configuration, that robot will move towards any robot in the non-empty quadrant. So, the movement of
the corner robot contributes to the decrement in distance in both directions.
Consider robots inside the quadrant are presently very close to one of the boundaries and the corner robot moves towards that robot, then the decrement in one of the dimensions can be small (an $\epsilon > 0$).
Consider for example the configuration of a strip of width $b$,
then the corner robot becomes the adjacent corner in the next round;
this can happen only finitely many times.
Each dimension converges within a distance $2b$, so in the converged state the shape of the converged area would be $2b$-sided square.
\qed\end{proof}

\begin{remark}
If the robots have some sense of angular knowledge, the corner robots can always move in a $\pi/4$ angle, so the decrement in both dimension is significant, hence convergence time is less on average.
\end{remark}

\section{Impossibility and Optimality}\label{sec:impossibilty}

Given these positive results, we now show that we cannot make the monoculus robots much weaker, otherwise we lose convergeability.

\begin{theorem}
There is no deterministic convergence algorithm for monoculus robots without any additional capability.
\end{theorem}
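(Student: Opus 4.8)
The plan is to prove impossibility via a symmetry/indistinguishability argument, exploiting the fact that a monoculus robot with \emph{no} additional capability (no $\mathcal{LD}$, no $\mathcal{OLA}$, no chirality, no agreement on directions) sees only a set of \emph{directions} to the other robots in its own arbitrarily-oriented coordinate system, with no way to compare distances or to break the symmetry between its own axis choices. First I would fix a simple bad configuration: two robots $r_1, r_2$ placed at distinct points at distance $d > c$ (so convergence has not yet been achieved). Each robot sees exactly one other robot, namely in the direction of the line joining them. The key observation is that from the local view alone, a robot cannot tell \emph{how far} the other robot is, nor even orient its own frame consistently, so its Compute step can only output a direction expressed relative to the single observed direction; and since the robot moves a fixed step $b$, the only deterministic, rotation/reflection-equivariant behaviors available are: move toward the other robot by $b$, move away by $b$, or stay.

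Next I would let the adversary (the ASYNC scheduler, which we are free to use per the model) pick the worst of these three universal behaviors. If the algorithm says ``move away'' or ``stay'' when exactly one robot is visible, the adversary simply activates the robots forever and the distance never drops below $c$ (it stays or grows) — no convergence. So the algorithm must say ``move toward the other robot by $b$'' in the 2-robot case. But now I would use obliviousness plus a scaling argument: consider instead the configuration where $r_1$ and $r_2$ are at distance exactly, say, $d_0$ chosen so that $b > 2 d_0$ is false but rather so that after one synchronous step the two robots \emph{swap past each other} and land at distance $\lvert 2b - d_0\rvert$; by choosing $d_0$ appropriately (or by letting the scheduler activate only one of the two robots at a time, ASYNC-style) the inter-robot distance oscillates without ever becoming smaller than $c$. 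More carefully: with one robot activated at a time, the pair distance changes by exactly $\pm b$ each activation; starting from a distance that is $c$ plus a non-multiple-of-$b$ offset, or simply from distance $c + b/2$, the adversary alternates which robot to activate and which never lets the distance fall into $[0,c)$ — it bounces between $c+b/2$ and $c - b/2$... and here I must be careful, because $c-b/2 < c$ would actually be convergence-violating in the wrong direction for my argument; so instead I fix the starting distance at exactly $b$ (assuming $c < b$ is not forced — but the paper only needs \emph{some} $c$, and for the $\mathcal{LD}$ model $c \ge 2b$), making the two robots, when both move toward each other synchronously, land on the \emph{same point} and then, being oblivious, they see no one and cannot move, so they are stuck at distance $0$ — that's fine — but if only one moves each round the distance goes $b \to 0$? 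No: one moves, distance becomes $0$; that \emph{is} converged.

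Since the naive two-robot attack is delicate, the robust route I would actually take is the classical \emph{bivalent-symmetric ring} argument: place an even number $n \ge 4$ of robots at the vertices of a regular $n$-gon of circumradius $R$ large enough that the inter-robot distance exceeds $c$. Every robot has an identical local view (a centrally-symmetric set of directions), so in FSYNC every activated robot computes the ``same'' move up to the rotation relating their frames; by the equivariance forced by the absence of any common reference, the only fixed-points of this symmetry are moves along the robot's ray toward/away from the polygon's center or tangential moves, and \emph{all} robots execute the corresponding congruent move simultaneously. One then checks that under \emph{every} such universal rule the regular $n$-gon maps to another regular $n$-gon (possibly rotated, possibly of a different radius); the adversary picks the scheduler so that the radius does not shrink below the convergence threshold — in particular, if the rule ever contracts the polygon, the adversary rescales the \emph{initial} configuration to be correspondingly huge, and if the rule is ``stay'' or ``expand'' it is immediate. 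The one case needing thought is a rule that genuinely contracts every regular $n$-gon toward its center for all $n$ and all radii; to kill this I would switch to the two-robots-on-a-line sub-case (a ``degenerate $2$-gon''), where the rule must, by the same equivariance, be ``move toward the unique visible robot by $b$'' or ``stay''/``move away''; ``stay''/``away'' fails convergence immediately, and ``move toward by $b$'' is defeated by the ASYNC scheduler activating a single robot repeatedly so the configuration translates rigidly or, with the right initial spacing, oscillates, never converging — and crucially this two-robot instance is \emph{consistent} with any $n$-gon rule because the scheduler, not the algorithm, chose the instance.

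The main obstacle I anticipate is making the case analysis on ``universal rules'' both exhaustive and clean: one must argue rigorously that \emph{equivariance under the full symmetry group of the robot's featureless local frame} (arbitrary rotation, and reflection since chirality is absent) forces the Compute output to lie along the axis of symmetry of the observed direction-set — for two robots this is the line through them; for the regular $n$-gon this is the line through the robot and the center — and then that a fixed step size $b$ leaves only countably many resulting configurations that the adversary can keep away from the converged disc. I would isolate this as a lemma: ``For a monoculus robot with no additional capability, in any centrally- or axially-symmetric configuration the displacement vector is parallel to the symmetry axis,'' and then the scheduler construction is routine. A secondary subtlety is ASYNC movement-during-look, but since the impossibility only needs \emph{one} bad execution, restricting to FSYNC or to one-robot-at-a-time SSYNC schedules is legitimate and simplifies everything.
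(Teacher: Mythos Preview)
Your approach differs substantially from the paper's, and it has a genuine gap that you yourself stumble over in the draft: the case where the deterministic rule is ``move distance $b$ toward the unique visible robot'' (two robots) or, more generally, ``move toward the centre'' (regular $n$-gon) \emph{cannot} be defeated by scheduling alone. With two robots and one robot activated per step, the distance $d$ becomes $|d-b|$ and is eventually trapped in $[0,b]$; under FSYNC the pair is trapped in $[0,2b]$. Your proposed fixes (``the configuration translates rigidly'', ``oscillates, never converging'') are simply false: activating one robot repeatedly drives it toward the stationary one, it does not translate the pair, and no ASYNC interleaving of ``move toward'' steps keeps the distance bounded away from $b$. The regular-$n$-gon version fails in the same way: a contracting rule drives the circumradius into $[0,b]$ in finitely many synchronous rounds and keeps it there. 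Since the whole point of the theorem is to separate bare monoculus robots from the $\mathcal{LD}$ and $\mathcal{OLA}$ models (which operate in the regime $c\ge 2b$), an argument that at best bites only when $c$ is tiny relative to $b$ does not establish the intended impossibility.

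The paper's proof rests on a different and simpler idea that you never invoke: \emph{indistinguishability across scale}. A monoculus robot $r$ sees only the set of directions to the other robots, so any two configurations with the same angular pattern at $r$ but arbitrary radial distances are identical from $r$'s viewpoint. Whatever direction a deterministic algorithm outputs for $r$ in the ``nice'' configuration $C_1$, one constructs an indistinguishable $C_2$ by pulling the robots lying in that output direction to within distance less than $b$ of $r$ while leaving the angular picture unchanged, so that $r$'s fixed step of length $b$ lands outside the convex hull of $C_2$. This handles exactly the ``move toward a visible robot'' rule that wrecks your symmetric-instance attack: put that target at distance $b/2$ and $r$ overshoots the hull. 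Equivariance on a single rigid symmetric instance is not enough; the radial (scaling/skewing) degree of freedom is the missing ingredient.
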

\begin{proof}
We prove the theorem using a symmetry argument.
Consider the two configurations $C_1$ and $C_2$ in Fig.~\ref{fig:indistinguishableconfig}.
In $C_1$, all the robots are equidistant from robot $r$, while in $C_2$, the robots are at different distances, however the relative angle of the robots is the same at $r$. Now considering the local view of robot $r$, it cannot distinguish between $C_1$ and $C_2$.
 Say a deterministic algorithm $\phi$ decides a direction of movement for robot $r$ in configuration $C_1$.
 Since both $C_1$ and $C_2$ are the same from robot $r$'s perspective, the deterministic algorithm outputs the same direction of movement for both cases.
 \begin{figure}[H]
\centering
\includegraphics[height=0.3\linewidth]{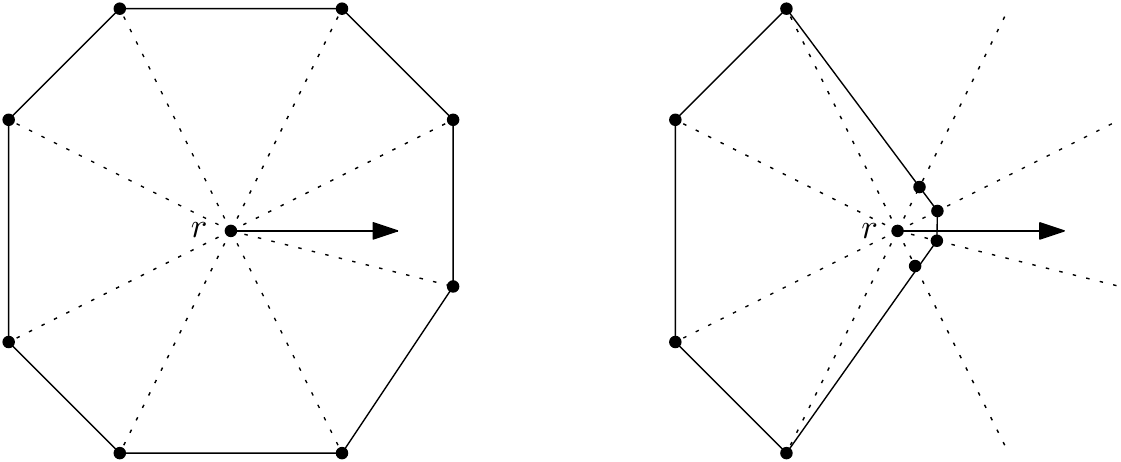}
\caption{Locally indistinguishable configurations with respect to $r$}\label{fig:indistinguishableconfig}
\end{figure}
Now consider the convex hull $CH_1$ and $CH_2$ of $C_1$ and $C_2$ respectively.
The robot $r$ moves a distance $b$ in one round.
The distance from any point inside $CH_1$ is more than $b$ but we can skew the convex hull in the direction of movement, so to make it like $CH_2$, where if the robot $r$ moves a distance $b$ it exits $CH_2$.
 Therefore there always exists a situation for any algorithm $\phi$ such that the area of the convex hull increases. Hence it is impossible for the robots to converge.

\qed\end{proof}

\section{Simulation}\label{sec:simul}

We now complement our formal analysis with simulations, studying the average
case. We assume that robots are distributed uniformly at random
in a square initially, that $b=1$
and $c=2$, and
we consider FSYNC scheduling. As a baseline to evaluate performance, we consider the
optimal convergence distance and time if the robots had capability to observe positions,
i.e., they are \emph{not} monoculus. Moreover, as a lower bound,
we compare to an algorithm which converges all robots to
the centroid, defined as follows:
$$ \{\bar{x}, \bar{y}\} = \left\{\cfrac{\sum_{i=1}^nx_i}{n},\cfrac{\sum_{i=1}^ny_i}{n}\right\}$$
\noindent where $\{x_i,y_i\} \forall i \in \{1,2,\cdots,n\}$ are the robots' coordinates.

We calculate distance $d_i$ from each robot to the centroid
in the initial configuration. The optimal distance we have used as
convergence distance is the sum of distances from each robot to the unit disc
centered at the centroid. So the sum of the optimal convergence distances $d_{opt}$ is given by
$$d_{opt} = \sum_{i=1}^n (d_i -1), \quad if \, d_i>1 $$
In the simulation of Algorithm~\ref{algo:convergelocality}, we define $d_{CL}$ as the cumulative number of steps taken by all the robots to converge (sometimes also known as the \emph{work}). Now we define the performance ratio, $\rho_{CL}$ as
 $$ \rho_{CL} = \cfrac{d_{CL}}{d_{opt}}$$
Similarly for Algorithm~\ref{algo:convergequadrant} we define $d_{CQ}$ and $\rho_{CQ}$. \\
 In Fig.~\ref{fig:varrobot}, we plot the distribution of 100 iterations of simulation of Algorithm~\ref{algo:convergelocality}, varying the number of robots for a fixed region of deployment.
The median increases if we increase the number of robots deployed in the same region.

In Fig.~\ref{fig:varrobotCQ}, we plot the distribution of 100 iterations of simulation of Algorithm~\ref{algo:convergequadrant} varying the number of robots for a fixed region of deployment.
The median increases if we increase the number of robots deployed in the same region.
In Fig.~\ref{fig:varregionCQ}, we plot the distribution of 100 iterations of simulation of Algorithm~\ref{algo:convergequadrant} for a fixed number of robots deployed in different regions. Here we can observe that the distribution does not vary much even if we change the region of deployment.
\begin{minipage}{\linewidth}
\begin{minipage}{0.45\linewidth}
\begin{figure}[H]
\includegraphics[width=\linewidth]{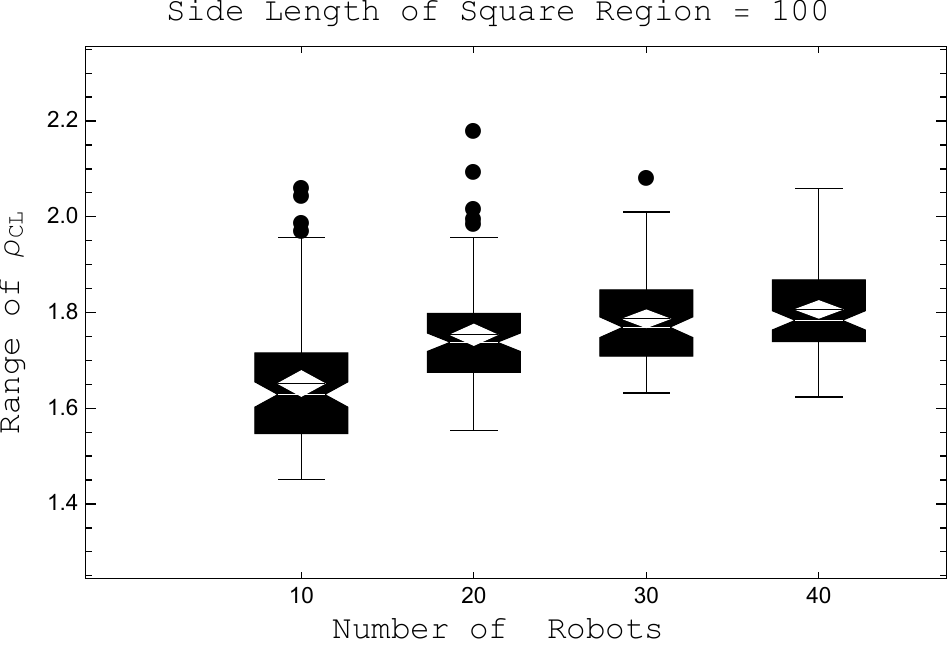}
\caption{Different number of robots in same region in $\mathcal{LD}$}\label{fig:varrobot}
\end{figure}
\end{minipage}\hspace{0.1\linewidth}
\begin{minipage}{0.45\linewidth}
\begin{figure}[H]
\includegraphics[width=\linewidth]{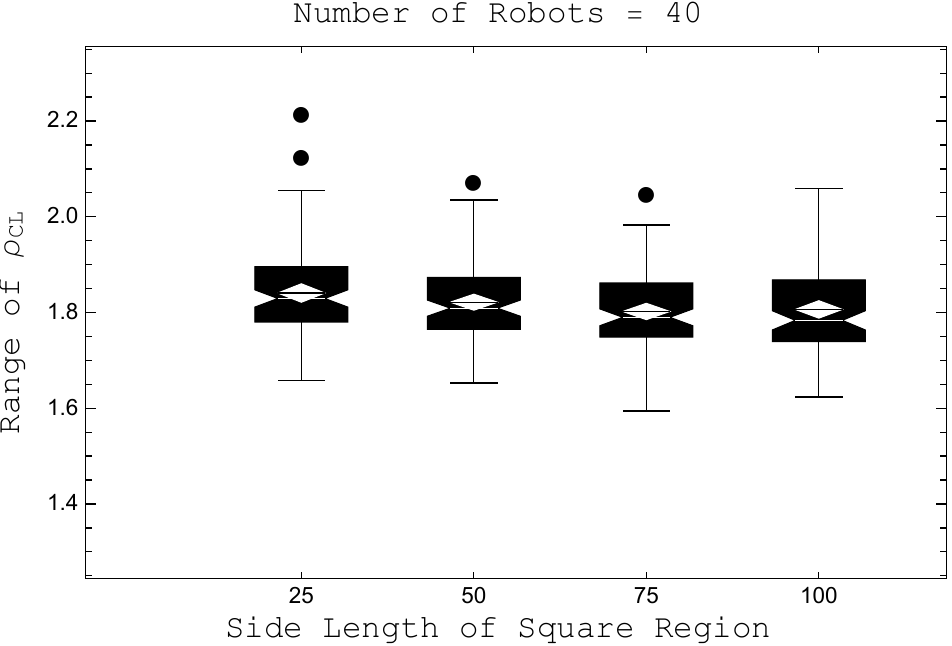}
\caption{Fixed number of robots deployed in different region in $\mathcal{LD}$}\label{fig:varrange}
\end{figure}
\end{minipage}
\begin{minipage}{0.45\linewidth}
\begin{figure}[H]
\includegraphics[width=\linewidth]{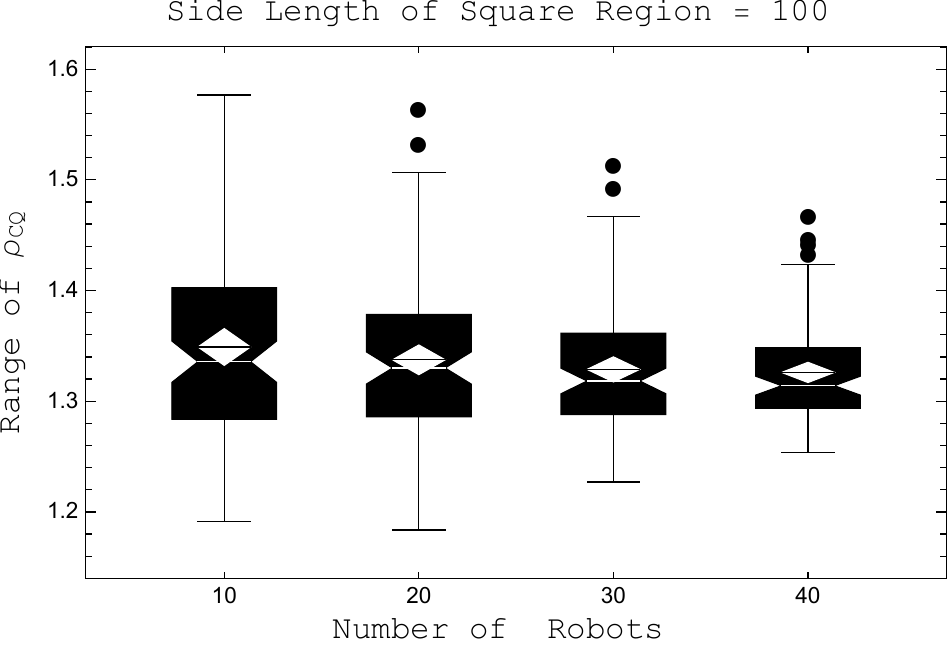}
\caption{Different number of robots in same region in $\mathcal{OLA}$}\label{fig:varrobotCQ}
\end{figure}
\end{minipage}\hspace{0.1\linewidth}
\begin{minipage}{0.45\linewidth}
\begin{figure}[H]
\includegraphics[width=\linewidth]{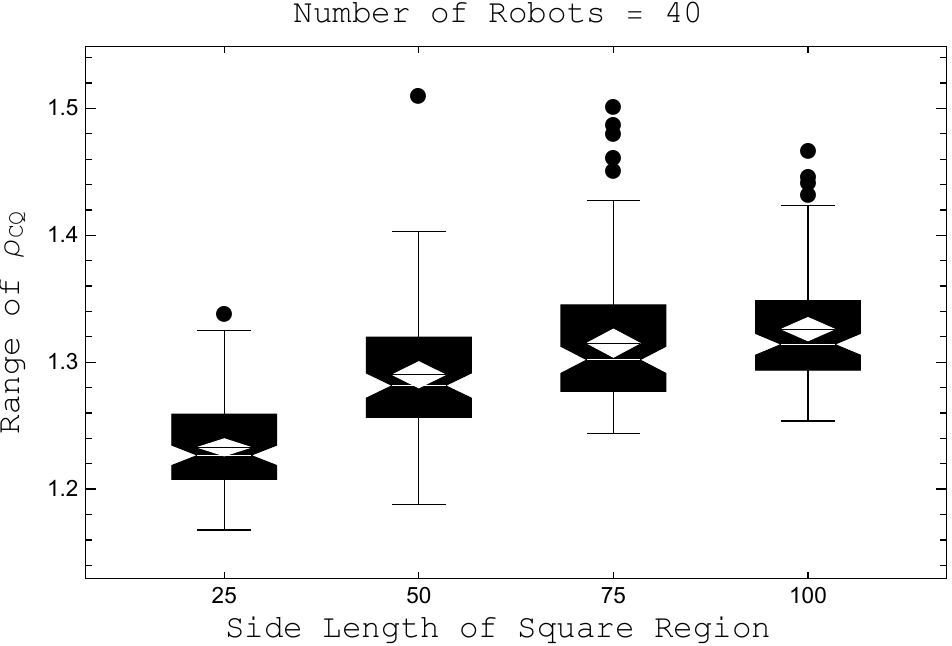}
\caption{Fixed number of robots deployed in different region in $\mathcal{OLA}$}\label{fig:varregionCQ}
\end{figure}
\end{minipage}
\end{minipage}\vspace{5mm}
In Fig.~\ref{fig:varrange}, we plot the distribution of 100 iterations of simulation of Algorithm ~\ref{algo:convergelocality} for a fixed number of robots deployed in different regions. Here we can observe that the distribution does not vary much even if we change the region of deployment.
\begin{figure}[!h]\centering
\includegraphics[height=0.37\linewidth]{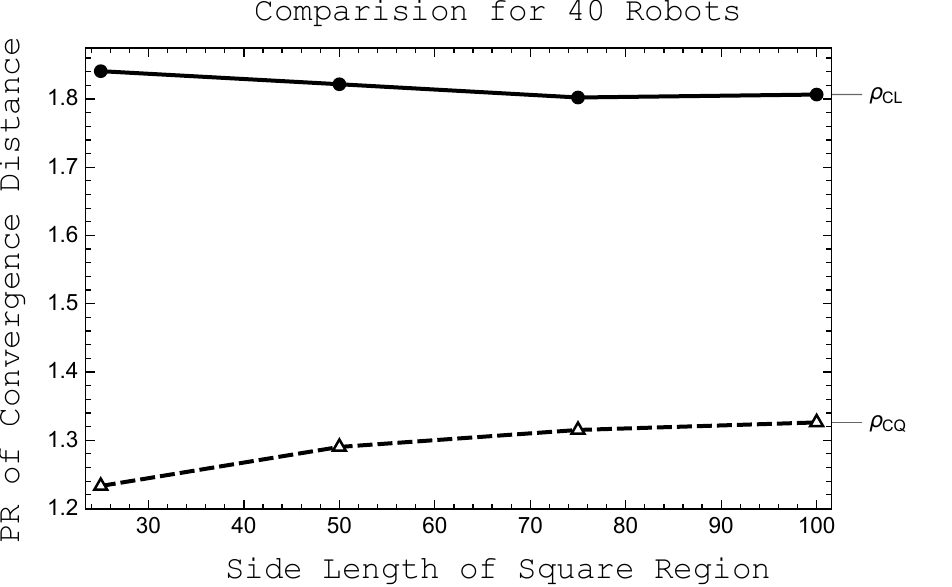}
\caption{ $\rho_{CL}$ VS $\rho_{CQ}$ for the same number of robots}\label{fig:compareCLCQrobotdistance}
\end{figure}
\begin{figure}[!h]\centering
\includegraphics[height=0.37\linewidth]{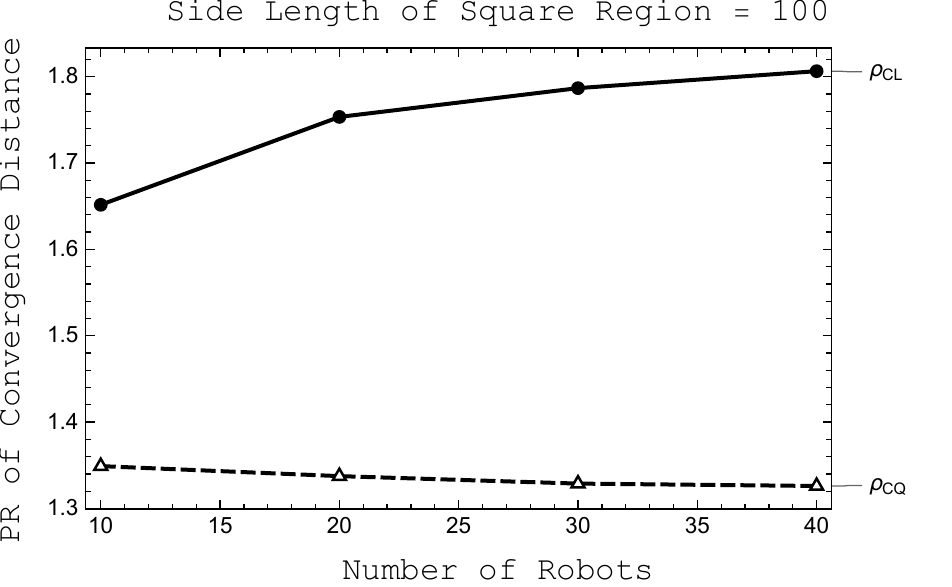}
\caption{ $\rho_{CL}$ VS $\rho_{CQ}$ for the same size of region}\label{fig:compareCLCQregiondistance}
\end{figure}
\begin{figure}[!h]\centering
\includegraphics[height=0.37\linewidth]{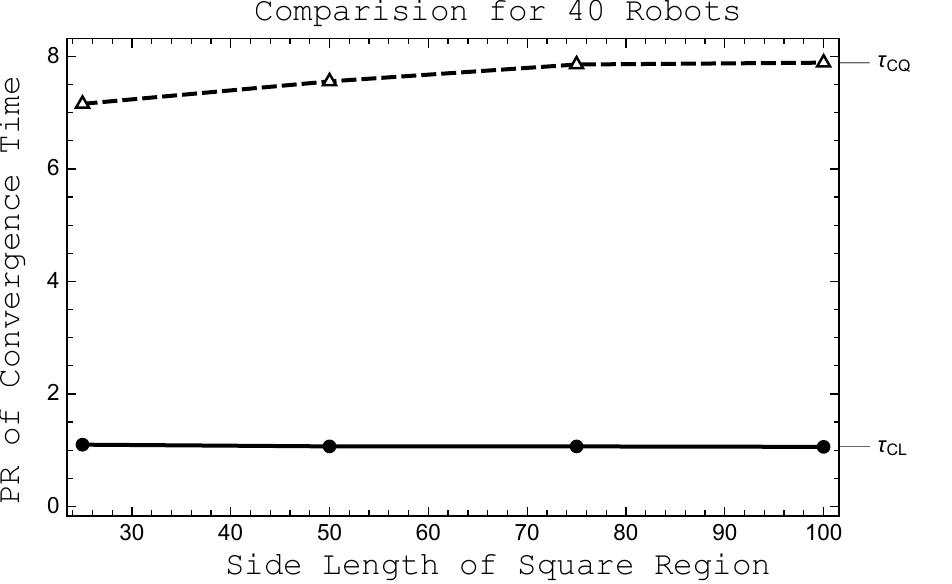}
\caption{ $\tau_{CL}$ VS $\tau_{CQ}$ for the same number of robots}\label{fig:compareCLCQrobot}
\end{figure}
\begin{figure}[!h]\centering
\includegraphics[height=0.37\linewidth]{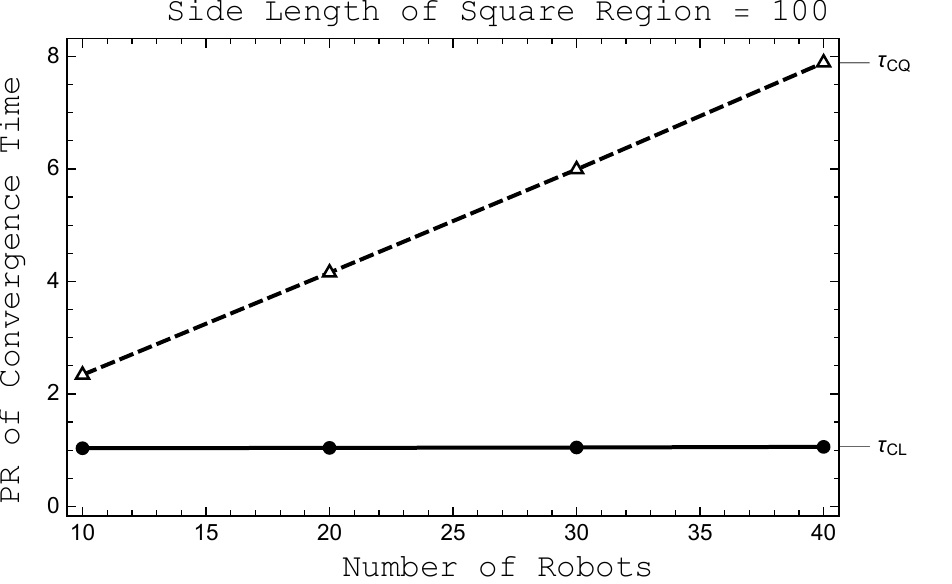}
\caption{$\tau_{CL}$ VS $\tau_{CQ}$ for the same size of region}\label{fig:compareCLCQregion}
\end{figure}

Fig.~\ref{fig:compareCLCQrobotdistance}, and ~\ref{fig:compareCLCQregiondistance} show the comparison between the performance ratio (PR) for distance. We can observe that Algorithm~\ref{algo:convergequadrant} performs better. This is due to the fact that, in Algorithm~\ref{algo:convergequadrant} only boundary robots move.

 Let $d_{max}$ be the distance of farthest robot from the centroid and  $t_{CL} $ be the number of synchronous rounds taken by Algorithm~\ref{algo:convergelocality} for convergence. We define $\tau_{CL}$ as follows
$$\tau_{CL} =\cfrac{t_{CL}}{d_{max}}$$
Similarly for Algorithm~\ref{algo:convergequadrant}, we define $t_{CQ}$ and $\tau_{CQ}$. $\tau_{CL}$ and $\tau_{CQ}$ show performance ratio for convergence time of Algorithm~\ref{algo:convergelocality} and ~\ref{algo:convergequadrant} respectively.
 In Fig.~\ref{fig:compareCLCQrobot} and ~\ref{fig:compareCLCQregion}, we can observe that $\tau_{CL}$ is very close to 1, so Algorithm~\ref{algo:convergelocality} converges in almost the same number of synchronous rounds (proportional to distance covered, since step size $b=1$) as the maximum distance.
We can observer that Algorithm~\ref{algo:convergequadrant} takes more time as the number of robots and the side length of square region increases.
\section{Discussion}\label{sec:discussion}

This section shows that our approach supports some interesting
extensions.

\subsection{Termination for $\mathcal{OLA}$ Model}

While we only focused on convergence and not termination so far,
we can show that with a small amount of memory,
termination is also possible in the $\mathcal{OLA}$
model.
To see this, assume that each robot has a 2-bit persistent memory in the $\mathcal{OLA}$ model for each dimension, total 4-bits for two dimensions. 
 Algorithm~\ref{algo:convergequadrant} has been modified to Algorithm~\ref{algo:convergequadranttermination} such that it can accommodate termination.
All the bits are initially set to 0. 
Each robot has its local coordinate system, which remains consistent over the execution of the algorithm. The four bits correspond to four boundaries in two dimensions, i.e., left, right, top and bottom.
 If a robot finds itself on one of the boundaries according to its local coordinate system, then it sets the corresponding bit of that boundary to 1. Once both bits corresponding to a dimension become 1, the robot stops moving in that dimension.
Consider a robot $r$. Initially it was on the left boundary in its local coordinate system. Then it sets the first bit of the pair of bits corresponding to $x$-axis. It moves towards right. Once it reaches the right boundary, then it sets the second bit corresponding to $x$-axis to 1. Once both the bits are set to 1, it stops moving along the $x$-axis. Similar movement termination happens on the $y$-axis also. Once all the 4-bits are set to 1, the robot stops moving.

%
\begin{algorithm}[!h]
\caption{\textsc{ConvergeQuadrantTermination}}\label{algo:convergequadranttermination}
\SetKwInOut{Input}{Input}\SetKwInOut{Output}{Output}
\DontPrintSemicolon
\Input{Any arbitrary configuration and robot $r$ with 4-bit memory}
\Output{All robots are inside a square with side $2b$ }
\eIf{the robot is on a boundary(ies)}{set the corresponding bit(s) to 1}{Do nothing \tcp*{$r$ is an inside robot}}
\uIf{$r$ is a boundary robot and the bits corresponding to that dimension are not 1}{Move perpendicular to the boundary to the side with robots}
\uElseIf{$r$ is a corner robot}{\eIf{Both bits corresponding to a dimension is 1}{Move in other dimension to the side with robots}{Move towards any robot in the non-empty quadrant}}
\Else{Do not move \tcp*{$r$ is not on boundary OR all four bits are 1}}
\end{algorithm}


\subsection{Extension to $d$-Dimensions}

Both our algorithms can easily be extended to $d$-dimensions.
For the $\mathcal{LD}$ model, the algorithm remains exactly the same.
For the proof of convergence, similar arguments as Lemma~\ref{lem:finitedecrement} can be used in $d$ dimensions.
We can consider the convex hull in $d$-dimensions and the boundary robots of the convex hull always move inside. The size of convex hull reduces gradually and the robots converge.

Analogously for the $\mathcal{OLA}$ model, the distance between two robots in the boundary of any dimension gradually decreases and the corner robots always move inside the $d$-dimensional cuboid. Hence it converges.
Here the robot would require $2d$ number of bits for termination.

\section{Related Work}\label{sec:relwork}

 The problems of gathering \cite{suzuki1999distributed}, where all the robots gather at a single point,
  convergence \cite{cohen2006convergence}, where robots come very close to each other and
 Pattern formation \cite{flochini1,suzuki1999distributed} have been studied intensively in the literature.

   Flocchini et al. \cite{FlocchiniPSW99} introduced the CORDA or Asynchronous (ASYNC) scheduling model for weak robots. Suzuki et al. \cite{suzu} have introduced the ATOM or Semi-synchronous (SSYNC) model.
       In \cite{suzuki1999distributed}, impossibility of gathering for $n=2$ without assumptions on local coordinate system agreement for \textit{SSYNC} and \textit{ASYNC} is proved.
Also, for $n>2$ it is impossible to solve gathering without assumptions on either coordinate system agreement or multiplicity detection \cite{Prencipe2007}. Cohen and Peleg \cite{CohenP04} have proposed a center of gravity algorithm for convergence of two robots in ASYNC and any number of robots in SSYNC.

To the best of our knowledge in all the previous works, the mathematical models always assume that the robots can find out the location of other robots in their local coordinate system in the Look step. This in turn implies that the robots can measure the distance between any pair of robots albeit in their local coordinates. All the algorithms exploit this location information to create an invariant point or a robot where all the other robots gather. But in this paper we deprive the robots of the capability to determine the location of other robots. This leads to robots incapable of finding any kind of distance or angles.

Any kind of pattern formation requires these robots to move to a particular point of the pattern. Since the monoculus robots cannot figure out locations, they cannot stop at a particular point. Hence any kind  of pattern formation algorithm described in the previous works which requires location information as input are obsolete. Gathering problem is nothing but the point formation problem \cite{suzuki1999distributed}. Hence gathering is also not possible for the monoculus robots.


\section{Conclusion}\label{sec:conclusion}

This paper introduced the notion of \emph{monoculus robots}
which cannot measure distance: a practically relevant generalization of
existing robot models. We have proved that the two basic models
still allow for convergence (and with a small memory,
even termination), but with less capabilities,
this becomes impossible.

The $\mathcal{LD}$ model converges in an almost optimal number of rounds, while the $\mathcal{OLA}$ model takes more time.
But the cumulative number of steps is less for the $\mathcal{OLA}$ model compared to the $\mathcal{LD}$ model since only boundary robots move. Although we found in our simulations that the median and angle bisector strategies successfully converge, finding a proof accordingly remains  an
open question.
We see our work as a first step, and believe that the study of weaker robots opens an interesting field for future research.

\bibliographystyle{plain}
\bibliography{bib}
\end{document}